\title{Compound Dirichlet Processes}
\author{
  Arrigo Coen\thanks{CORRESPONDING AUTHOR: Arrigo Coen, Email: coen@ciencias.unam.mx} \\
  Departamento de Matem\'aticas, Facultad de Ciencias \\
  Universidad Nacional Aut\'onoma de M\'exico\\
  M\'exico, CDMX, Apartado Postal 20-726, 01000, M\'exico\\
  \texttt{coen@ciencias.unam.mx} 
   \And
Beatriz God\'inez-Chaparro\\
Departamento de Sistemas Biol\'ogicos, Divisi\'on de Ciencias Biol\'ogicas y de la Salud,\\
Universidad Aut\'onoma Metropolitana$-$Xochimilco,\\
Mexico City, Mexico,\\ 
\texttt{bgodinez@correo.xoc.uam.mx}
}
\newcommand{\pr}[1]{\mathbb{P}\left[#1\right]}
\newcommand{\esp}[1]{\mathbb{E}\left[#1\right]}
\newcommand{\dist}[1]{\mathrm{#1}} 
\newtheorem{thm}{Theorem}
\newtheorem{lemma}{Lemma}
\newtheorem{prop}{Proposition}
\newtheorem{definition}{Definition}
\newcommand{\mb}{\mathbb}
\newcommand{\alphaT}{\alpha^T}
\newcommand{\alphaX}{\alpha^X}
\newcommand{\GT}{G^T}
\newcommand{\GX}{G^X}
\newcommand{\GTo}{G_0^T}
\newcommand{\GXo}{G_0^X}
\newcommand{\FTtheta}{F_{\theta_j}^T}
\newcommand{\FXtheta}{F_{\theta_i}^X}
\newcommand{\FT}{F^T}
\newcommand{\FX}{F^X}
\newcommand{\thetaXi}{\theta_i^X}
\newcommand{\thetaTj}{\theta_j^T}
\begin{document}
\maketitle

\begin{abstract}
	The compound  Poisson process and the Dirichlet process are the pillar structures of Renewal theory and Bayesian nonparametric theory, respectively. Both processes have many useful extensions to fulfill the practitioners needs to model the particularities of data structures. Accordingly, in this contribution we joined their primal ideas to construct the compound Dirichlet process and the compound Dirichlet process mixture. As a consequence, these new processes had a fruitful structure to model the time occurrence among events, with also a flexible structure on the arrival variables. These models have a direct Bayesian interpretation of their posterior estimators and are easy to implement. We obtain expressions of posterior distribution, nonconditional distribution and expected values. In particular to find these formulas we analyze sums of random variables with Dirichlet process priors. We assessed our approach by applying our model on a real data example of a contagious zoonotic disease. 
\end{abstract}

\keywords{Bayesian  Nonparametrics\and Renewal theory\and Compound Poisson process\and Dirichlet process\and Random sums.}

\section{Introduction}


In this contribution are presented two continuous time processes that are probabilistically constructed through a random sum, using the framework of Bayesian nonparametric models. As a consequence of its construction, these processes could be use to model renewal phenomena. Examples of applied Bayesian nonparametric models to analyze Renewal theory phenomena are presented in  \cite{Bulla2007,Frees1986,Xiao2015}. One of the principal reasons to combine these methodologies is the fact that in many cases the renewal phenomena have complex random structures. Therefore, for these type of analysis could be better to let the data to speak by itself. By using parameter-free models, important hidden structures unveil, whereas a parametric model may conceal them. Although the combination of these branches is not new, the use of Dirichlet process that is here presented is a novel technique. 

For many applied statisticians random sums models are everyday tools. An advantage of these models is that they allow us to examine the data as the contribution of simpler parts, which improve calculations and predictions. To choose a random sum model there are three key probability concepts to have in mind: 1) the law governing the number of terms to add; 2) the dependence among the terms, and; 3) the interactions between 1) and 2). For instance, in  \cite{Rolski_book} these concepts are applied to model the behavior of insurance claims by taking in to account: 1) how many insurance claims are received in a fixed period of time; 2) the dependence of claims sizes, and; 3) the connections between the number of claims and their sizes (see also, \cite{Gebizlioglu2018,Coen2015}). Other fields where random sum models are currently applied are Multivariate analysis to model daily stock values \cite{Nadarajaha2019}, Bayesian nonparametric theory to estimate the total number of species in a population \cite{Zhang2005},  and Finance to estimate the skewed behavior of a time series \cite{Nadarajah2017a}.

The classical theory of renewal processes focus on the analysis of counting process where the interarrival times are independent and identically distributed (i.i.d.). The most remarkable example of renewal process is the Poisson process, whose interarrival times are i.i.d. exponential variables \cite{kingman1992poisson}. By allowing some interaction among the variables, this model has been generalized to resemble more intricate phenomena. Examples of these generalizations are the Cox process, the non-homogeneous Poisson process and the Markov and Semi-Markov renewal models. A thorough analysis of these models is presented in \cite{kovalenko1997}.

To define our model, we will use one of the most influential Bayesian nonparametric structure, the Dirichlet process (DP) prior \cite{Ferguson1973}. The DP effectiveness is exhibit by its successful application in many statistical analysis. As pointed out by Ferguson in \cite{Ferguson1973},  two desirable properties of a prior distribution for nonparametric problems are: a large support and a manageable posterior distribution. The DP prior handles both properties in a remarkable manner, with the clear interpretation of its parameters. Moreover, the many representations of the Dirichlet process had rise diverse important Bayesian nonparametric contributions: neutral to the right processes \cite{Doksum1974}, normalized log-Gaussian processes \cite{Lenk1988}, stick breaking priors \cite{Lijoi2005,Ishwaran2001},  species sampling models \cite{Pitman1996}, Poisson-Kingman models \cite{Pitman2003} and normalized random measures with independent increments \cite{Prnster2003}, to mention a few. Each of these models generalize an aspect of the Dirichlet process in some direction, thus, obtaining more modeling flexibility with respect to some specific feature of the data.

In this study we are applying the Dirichlet process as a mechanism to control the probability structure of a random sum stochastic process. Under this framework, we inherit the flexibility of the DP to resemble the data behavior and have a wide spectrum of probability structures to establish as prior believes. Also, we gain an interpretation of the clustering structure of the renewals and an efficient posterior simulation algorithm. In fact, these models allow us to analyze the cluster behavior of the time and space components, induced by the discrete random measures settle to each component.

\section{Compound Dirichlet Processes}

In this section we define the stochastic structure of the compound Dirichlet process  and the compound Dirichlet process mixture, and show some of their appealing modeling properties. These processes could be applied to phenomena where the a stochastic-time component defines the arrivals of random variables. Under this framework, we settle a dependence structure among arrivals and another among the events of the arrivals, keeping independence between each other.  

Let us first consider a sequence of positive random variables $ \{T_j\}_{j=1}^\infty $, and define its renewal process $ \{N_t\}_{t\in\mb{R}_+} $  as
\[ N_t =\sup\{j\in\mb{N}:T_1+T_2+\cdots+T_j< t \}, \quad t\in\mb{R}_+,\]
where the random variables $ T_j $ have the interpretation as the interarrival times between events of the phenomenon of study. Then, $N_t $ is the number of events that take place before time $ t $.  The general theory of exchangeable renewal models is studied in \cite{Coen2018}, however, here we analyze the particular implications of the DP prior framework. To this end, similarly to the ideas of a compound Poisson process, we will focus our analysis on the random process $ \{S_t\}_{t\in\mb{R}_+} $ given by
\begin{equation}\label{eq_sum_St}
S_t=\sum_{i=1}^{ N_t}X_i, \qquad t\in\mb{R}_+,
\end{equation}
where $ \{X_i\}_{i=1}^\infty$ are independent of  $ \{T_j\}_{j=1}^\infty $ . In this construction we will place two exchangeable structures: one over the  events $ \{X_i\}_{i=1}^\infty$ and one on their inter-arrival times  $ \{T_j\}_{j=1}^\infty $. The advantage of assuming this symmetric structure lies in the fact that with it we could model various dependence behaviors and, at the same time, allows the analysis of cluster formations among variables \cite{Aldous1985}. To define these dependent structures we will use Dirichlet process priors. The DP prior model is define as
\begin{align}\label{eq_general_DP}
X_i\mid G &\sim  G,\\
G &\sim \mbox{DP}(\alpha, G_0),\nonumber
\end{align}
where $\mbox{DP}(\alpha, G_0)$ denotes a Dirichlet process with precision parameter $\alpha>0$ and base distribution $G_0$. The DP random measure $ G $ is define in \cite{Ferguson1973} by the distributional property 
\[ (G(A_1),\ldots,G(A_k))\sim \dist{Dir}(\alpha G_0(A_1),\ldots,\alpha G_0(A_k)), \]
for all measurable partition $(A_1,\ldots,A_k) $ of the sample space of $ G_0 $, where $ \dist{Dir}(a_1,\ldots,a_k)$ denotes the Dirichlet distribution of $k$-dimension  with parameter $ (a_1,\ldots,a_k) $. An implication of these assumptions is that the joint distribution of $(X_{1},\ldots,X_{n})$  can be factorized using the generalized P\'olya urn  scheme \cite{Blackwell1973}, i.e. for any $n>1$,
\begin{equation}\label{eq_Polya_scheme}
X_n|X_{n-1},X_{n-2},\ldots,X_1 \sim \dfrac{\alpha}{\alpha+n-1}G_0 + \dfrac{1}{\alpha+n-1}\sum_{i=1}^{n-1} \delta_{X_i},
\end{equation}
where $ \delta_x $ denotes de Dirac measure at $ x $. This last expression could be interpreted as $ X_n $ given $ X_{n-1},X_{n-2},\ldots,X_1 $ has probability $ \frac{\alpha}{\alpha+n-1} $ of being a new $G_0 $-distributed random variable independent of the past values and probability $ \frac{n-1}{\alpha+n-1} $ to repeat a previously seen value. This also implies that the random variables $ \{X_i\}_{i=1}^\infty$ are exchangeable, meaning that the joint distribution of $(X_{1},\ldots,X_{n})$ is equal to the distribution of $(X_{\pi_1},\ldots,X_{\pi_n})$, for any permutation $ \pi $ of $ \{1,\ldots,n\} $ \cite{Aldous1985}. It follows that the variables $ X_i $ are conditionally independent and identically distributed $G_0$, with constant correlation given by
\[ {\mathsf{Corr} (X_i,X_j)}= \frac{1}{\alpha+1},  \qquad  i,j\in\mb{ N}. \] 

A little difficulty to work directly with DP priors is that their samples are almost sure discrete random measures. Many works overcome this difficult by using a DP as a prior over the distribution of an extra layer of parameters \cite{Ferguson1983,Lo1984,Escobar1995}. In fact, in many cases these parameters help to make the description simpler and have a direct interpretation. These models are known as the Dirichlet process mixtures models, and they are define by the structure
\begin{align}\label{eq_general_DPM}
X_i\mid \theta_i&\sim F_{\theta_i}\nonumber\\
\theta_i\mid G &\sim  G,\nonumber\\
G &\sim \mbox{DP}(\alpha, G_0),\nonumber
\end{align}
where $ F_\theta $ denotes a member of a fixed family of distributions parametrized  by $ \theta $. Even thought, this last approach add a hidden extra layer of parameters, there are many Gibbs sampling methods to confront this issue \cite{Ishwaran2001,Maceachern1998,Neal2000}. Furthermore, the discreteness of the random measures of DP allows to study the clustering properties of the data \cite{Escobar1994a,Gorur2010,Escobar1995a}. Under these notations we can establish a nonparametric structure on \eqref{eq_sum_St}.
\begin{definition}\label{def_CDP}
	A continuous time stochastic process $ \{S_t\}_{t\in\mb{R}_+} $ given by \eqref{eq_sum_St} is a compound Dirichlet process (CDP) if it follows the stochastic structure
	\begin{align*}
	T_j\mid \GT  &\sim  \GT ,& X_i\mid \GX  &\sim  \GX ,\\
	\GT  &\sim \mbox{DP}(\alphaT, \GTo), &\GX  &\sim \mbox{DP}(\alphaX, \GXo),
	\end{align*}	
	where $ \{X_i\}_{i=1}^\infty$ is independent of  $ \{T_j\}_{j=1}^\infty $. To simplify the notation, we use  $ S_t\sim\dist{CDP}(\alphaT, \GTo,\alphaX, \GXo) $. 
\end{definition}

It is important to notice that, as in the DP framework, the CDP model also has a positive probability of repeat previously seen values. In the classical DP model, as $ n\to\infty $ the expected number of distinct $ X_i $ terms in $ \{X_1,\ldots,X_n \} $ grows as $ \alpha\log n $ \cite{Korwar1973}; it is important to notice tat this rate is smaller than $ n $. Consequently, the CDP has a positive probability of repeat increments. In other words, there is a positive probability that the increment $ S_{t_2}-S_{t_1}$ is equal to the increment $S_{t_4}-S_{t_3} $, for any positive real numbers $ t_1<t_2 $ and $ t_3<t_4 $. Nevertheless, it is important to notices that the rate of  repeated values is even smaller than the one of the DP framework. The addition operation confers a decrease in the number of repeated values; selecting different adding terms gives an extra possibility of different total results. In order to diminish the problem of repeated values and to study the clustering structure of the random variables, we have the next definition.
\begin{definition}\label{def_CDPM}
	A continuous time stochastic process $ \{S_t\}_{t\in\mb{R}_+} $ given by \eqref{eq_sum_St} is a compound Dirichlet process mixture (CDPM) if it follows the stochastic structure
	\begin{align*}
	T_j\mid \thetaTj &\sim \FTtheta & X_i\mid \thetaXi&\sim \FXtheta \\
	\thetaTj\mid \GT  &\sim  \GT ,& \thetaXi\mid \GX &\sim  \GX,\\
	\GT &\sim \mbox{DP}(\alphaT, \GTo), &\GX &\sim \mbox{DP}(\alphaX, \GXo),
	\end{align*}
	where $ \{X_i\}_{i=1}^\infty$ is independent of  $ \{T_j\}_{j=1}^\infty $. We use  $ S_t\sim\dist{CDPM}(\alphaT, \GTo,\FT,\alphaX, \GXo,\FX) $ to denote this process, where $ \FT $ and $ \FX $ represent parametric families of distributions.
\end{definition}

Under definitions \ref{def_CDP} and  \ref{def_CDPM} we have a fruitful structure to consider the time evolution of the accumulation of random variables. For instance, we could use this models to analyze the claims process of an insurance company; we could establish the next methodology to analyze the Capital Requirements of the company at year $ t $ \cite{Linder2004}. First, we use the data of years $ t-5 $ and $ t-4 $ to fit the base distributions $ \GTo $ and $ \GXo $. Then, we use the data of years $ t- 3$, $ t- 2$ and $ t- 1$ as a sample to obtain the posterior distribution of the CDPM model.  Finally, under Bayesian methodologies we obtain point estimators, confidence intervals, and also made hypothesis testing over the Capital Requirements. To satisfy these and other inference inquires,  the next section presents statistical implications of the CDP and CDPM models.

\subsection{Some properties and results of CDP and CDPM}

Let us continue with some properties of the CDP and CDPM models. These results are exhibit under the CDP framework, however their implication on the CDPM models are direct. The results are arrange in order to calculate, or at least  approximate, the posterior distribution of $ S_t $.
\begin{thm}\label{theo_dist_S_t}
	If $ S_t\sim\dist{CDP}(\alphaT, \GTo,\alphaX, \GXo) $, then for any $ t,s\in\mb{R}_+ $
	\begin{equation}\label{eq_dist_S_t}
	\pr{S_t\le s} = \sum_{n=1}^\infty\sum_{v\in\Delta_n} p_v(n)\left( H_1^{*v_1}\ast H_2^{*v_1}\ast\cdots\ast H_n^{*v_n}\right) (s)\pr{N_t=n}, 
	\end{equation}
	where  $ \Delta_n= \left\{v=(v_1,\ldots,v_n) \in\mb{ N}^n: \sum_{i=1}^niv_i=n \right\}$,   
	\[ p_v(n)= \dfrac{n!}{\alphaX(\alphaX+1)\cdots(\alphaX+n-1)} \prod_{i=1}^n \dfrac{(\alphaX)^{v_i}}{i^{v_i}v_i!},   \]
	$H_i^{*v_i}$ is the $ v_i $-convolution of the distribution $H_i(\cdot)=\GXo(\cdot/i) $, and $ \left( H_1^{*v_1}\ast H_2^{*v_1} \ast \cdots \ast H_n^{*v_n}\right) (\cdot) $ is the convolution of these convolutions.
\end{thm}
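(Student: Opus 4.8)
The plan is to condition on the number of arrivals $N_t$ and thereby reduce the statement to the law of a finite sum $X_1+\cdots+X_n$ of variables following the generalized P\'olya urn \eqref{eq_Polya_scheme}; this law is in turn governed by the exchangeable partition that the urn induces on $\{1,\dots,n\}$. First I would use that $\{X_i\}_{i=1}^\infty$ is independent of $\{T_j\}_{j=1}^\infty$ and that $N_t$ is a measurable function of the $T_j$'s, so that $\{X_1+\cdots+X_n\le s\}$ is independent of $\{N_t=n\}$. Writing $\{S_t\le s\}=\bigcup_{n}\bigl(\{N_t=n\}\cap\{X_1+\cdots+X_n\le s\}\bigr)$ and summing over $n$ gives
\[ \pr{S_t\le s}=\sum_{n\ge 1}\pr{X_1+\cdots+X_n\le s}\,\pr{N_t=n}, \]
so the whole task becomes computing the distribution function of $S_n^X:=X_1+\cdots+X_n$ under \eqref{eq_Polya_scheme} with parameters $\alphaX,\GXo$.

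Next I would exploit the tie structure of the urn. The relation \eqref{eq_Polya_scheme} forces the $X_i$ to coincide in groups; let $v=(v_1,\dots,v_n)$ record the induced partition by its \emph{type}, i.e.\ $v_i$ is the number of distinct values shared by exactly $i$ of the indices, so $\sum_i i v_i=n$ and $v\in\Delta_n$. Two facts drive the argument. First, conditionally on the block-size configuration, the distinct values carried by the blocks are i.i.d.\ $\GXo$, since in \eqref{eq_Polya_scheme} each ``new'' draw is an independent $\GXo$ variate. Second, the probability that the partition has type $v$ is exactly $p_v(n)$: a fixed set partition whose blocks have sizes $n_1,\dots,n_k$ occurs with urn-probability $\frac{(\alphaX)^{k}\prod_{j=1}^k(n_j-1)!}{\alphaX(\alphaX+1)\cdots(\alphaX+n-1)}$, the number of set partitions of $\{1,\dots,n\}$ of type $v$ is $\frac{n!}{\prod_{i=1}^n (i!)^{v_i}\,v_i!}$, and multiplying the two and using $(i-1)!/i!=1/i$ collapses to the Ewens sampling formula $p_v(n)$ stated in the theorem.

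Finally I would assemble the convolution. On the event that the partition has type $v$, a block of size $i$ bearing value $Y\sim\GXo$ contributes $iY$ to $S_n^X$, and $iY$ has distribution function $\pr{iY\le x}=\GXo(x/i)=H_i(x)$. Because distinct blocks carry independent values, conditionally on type $v$ the sum $S_n^X$ is a sum of $v_1$ independent copies of $H_1$, $v_2$ independent copies of $H_2$, \dots, $v_n$ independent copies of $H_n$, all mutually independent; hence its conditional distribution function is the convolution $H_1^{*v_1}\ast H_2^{*v_2}\ast\cdots\ast H_n^{*v_n}$. Averaging over $v\in\Delta_n$ with the weights $p_v(n)$ gives $\pr{S_n^X\le s}=\sum_{v\in\Delta_n}p_v(n)\,\bigl(H_1^{*v_1}\ast\cdots\ast H_n^{*v_n}\bigr)(s)$, and substituting this into the display of the first paragraph produces \eqref{eq_dist_S_t}.

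The only genuinely delicate point is the derivation of $p_v(n)$ from \eqref{eq_Polya_scheme}: one must get the urn probability of a generic labelled partition correct and then carefully count how many labelled partitions share a prescribed size-type $v$, checking that the factorials recombine to give the $i^{v_i}$ in the denominator of $p_v(n)$ (equivalently, one may simply invoke the Ewens sampling formula). Everything else is routine --- the independence used in the first step, the scaling identity $\pr{iY\le x}=\GXo(x/i)$, and the passage from a sum of independent blocks to a convolution. A small bookkeeping remark: for $s\ge 0$ the omitted $n=0$ term would contribute $\pr{N_t=0}$, which the stated sum tacitly absorbs.
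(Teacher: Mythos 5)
Your proposal is correct and follows essentially the same route as the paper: condition on $\{N_t=n\}$ via the independence of $\{X_i\}$ and $\{T_j\}$, then obtain the law of $X_1+\cdots+X_n$ by conditioning on the partition type $v$ induced by the P\'olya urn, with the Ewens sampling formula giving $p_v(n)$ and the blocks contributing independent $H_i$-distributed summands whose convolution yields the stated mixture (this is exactly the paper's Lemma~\ref{prop:nonparametric1}). The only cosmetic difference is that you derive the Ewens weights from the urn's partition probabilities and the count of labelled partitions of a given type, whereas the paper cites the formula directly; you also correctly write $H_2^{*v_2}$ where the paper's display has the typo $H_2^{*v_1}$.
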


The proof of Theorem \ref{theo_dist_S_t} is a direct consequence of the next lemma.

\begin{lemma}\label{prop:nonparametric1}
	If $ X_i| G \sim  G $ and $ G \sim \mbox{DP}(\alpha, G_0) $, we define $ \{S_n\}_{n=1}^\infty $ by
	\[ S_n=\sum_{i=1}^n X_i, \qquad n\in\mb{ N}.\]
	Then 
	\[\pr{S_n\le s} = \dfrac{	n!}{\alpha(\alpha+1)\cdots(\alpha+n-1)}\sum_{v\in\Delta_n}\left( H_1^{*v_1}\ast H_2^{*v_1}\ast\cdots\ast H_n^{*v_n}\right) (s) \prod_{j=1}^n \dfrac{\alpha^{v_j}}{j^{v_j}v_j!} ,\]
	for every $ s\in\mb{R}_+ $,	where $H_i^{*v_i}$ is the $ v_i $-convolution of the distribution $H_i(\cdot)=\GXo(\cdot/i) $. 
	Moreover, if we define 
	\[ N_t^X =\sup\{n\in\mb{N}:X_1+X_2+\cdots+X_n< t \}, \quad t\in\mb{R}_+.\]
	then $\pr{N_t^X=n}= \pr{S_n\le t}$ for  $ t>0 $.
\end{lemma}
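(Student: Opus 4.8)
The plan is to read off the c.d.f.\ of $S_n$ by conditioning on the random partition of $\{1,\dots,n\}$ that the ties among $X_1,\dots,X_n$ induce, and then to get the renewal identity by an elementary first-passage argument. Writing $\alpha^{(n)}=\alpha(\alpha+1)\cdots(\alpha+n-1)$, I would first invoke the P\'olya urn representation \eqref{eq_Polya_scheme}: the equivalence relation $i\sim j\iff X_i=X_j$ defines a random partition $\Pi_n$ of $\{1,\dots,n\}$, the probability of any \emph{fixed} partition with block sizes $b_1,\dots,b_k$ equals $\alpha^k\prod_{\ell=1}^{k}(b_\ell-1)!\,/\,\alpha^{(n)}$, and conditionally on $\Pi_n=\{B_1,\dots,B_k\}$ the distinct values $X^*_1,\dots,X^*_k$ carried by the blocks are i.i.d.\ $G_0$. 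Counting the $n!\big/\prod_{i=1}^{n}(i!)^{v_i}v_i!$ partitions of $\{1,\dots,n\}$ with block-size profile $v=(v_1,\dots,v_n)\in\Delta_n$ and using $(i-1)!/i!=1/i$, the probability that $\Pi_n$ has profile $v$ collapses to $p_v(n)=\dfrac{n!}{\alpha^{(n)}}\prod_{j=1}^{n}\dfrac{\alpha^{v_j}}{j^{v_j}v_j!}$.

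Next I would identify the conditional law of the sum. On the event that $\Pi_n$ has profile $v$ we have $S_n=\sum_{\ell}b_\ell X^*_\ell$ with the $X^*_\ell$ i.i.d.\ $G_0$; a block of size $i$ contributes $iX^*$, whose c.d.f.\ is $H_i(\cdot)=G_0(\cdot/i)$, the $v_i$ blocks of size $i$ jointly contribute a sum with c.d.f.\ $H_i^{*v_i}$, and blocks of different sizes are independent, so $S_n$ given profile $v$ has c.d.f.\ $H_1^{*v_1}\ast\cdots\ast H_n^{*v_n}$. Summing over $v\in\Delta_n$ against the weights $p_v(n)$ and pulling the $v$-free factor $n!/\alpha^{(n)}$ outside the sum then produces exactly the stated expression for $\pr{S_n\le s}$.

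For the second assertion I would use that, the $X_i$ being non-negative, $m\mapsto S_m$ is non-decreasing; hence for every $n$ one has $N^X_t\ge n\iff(\exists\,m\ge n:\ S_m<t)\iff S_n<t$, so $\pr{N^X_t\ge n}=\pr{S_n<t}$. If $G_0$ is non-atomic, each $H_i$ and hence each convolution of the previous step is continuous, so $\pr{S_n=t}=0$ and the inequality may be taken non-strict, giving $\pr{N^X_t\ge n}=\pr{S_n\le t}$; the point probabilities are then recovered by $\pr{N^X_t=n}=\pr{N^X_t\ge n}-\pr{N^X_t\ge n+1}$, which is also the form in which the lemma feeds into Theorem \ref{theo_dist_S_t} through the $T$-process.

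The hard part is the combinatorial bookkeeping in the first paragraph: reconciling the Chinese-restaurant partition probabilities with the multinomial count of partitions having a prescribed block-size profile so that the constants cancel down precisely to $p_v(n)$ (this is the Ewens/Antoniak sampling formula in disguise). Once that is settled, the convolution identification and the monotone first-passage argument are routine; the only points to state carefully are the non-atomicity of $G_0$, needed to replace $<t$ by $\le t$, and the non-negativity of the $X_i$, implicit in treating $N^X_t$ as a renewal count.
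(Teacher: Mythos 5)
Your proof is correct and follows essentially the same route as the paper: condition on the tie structure of $(X_1,\dots,X_n)$, identify the conditional law of the sum as the convolution $H_1^{*v_1}\ast\cdots\ast H_n^{*v_n}$ of scaled copies of $G_0$, and weight by the Ewens sampling formula --- the only difference being that you derive the Ewens weights from the P\'olya-urn/partition count rather than citing them, which (as in the paper, implicitly) requires $G_0$ to be non-atomic so that ties among the $X_i$ coincide with the urn's repetitions. For the second claim, your first-passage argument actually yields $\pr{N_t^X\ge n}=\pr{S_n\le t}$ together with the differenced point probabilities, which is the form Theorem \ref{theo_dist_S_t} needs; the lemma's literal equality $\pr{N_t^X=n}=\pr{S_n\le t}$ is an imprecision that the paper's proof merely asserts ``from the definition,'' so your more careful handling is an improvement rather than a deviation.
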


\begin{proof} 
	For the sake of completeness, we present the proof presented in \cite{Coen2018} for this result. Let $V=(V_1,\ldots,V_n)$  be the random vector indicating the repeated values in $(X_1,\ldots,X_n)$, under the following scheme: there are $V_1$ values that only repeats once, $V_2$ values that repeats twice, and so on. 	
	Then, conditioning on $V$ the distribution of $ S_n $ can be written as 
	\[ \pr{S_n\le s} = \sum_{v\in\Delta_n }\pr{X_1+\ldots+X_n\le s|V=v}\pr{V=v}. \]
	The conditional distribution of $ X_1+\ldots+X_n $ given $ V=v $ is equal to the convolution of $ v_1 $ independent variables with distribution $ H_{1} $, convolved with the convolution of $ v_2 $ variables distributed $ H_2 $, and so on. We condition on $ V $ because this eliminates the repeated values of $(X_1,\ldots,X_n)$, which allow us to consider the convolution of independent variables. Consequently, we define $ H_j $ because given the repeated values of $ X_i $s we need to consider the probabilities $ \pr{jX\le t} $, for $ X\sim G_0 $ and $ j\in\mb{ N} $. Thus, we obtain
	\[ \pr{X_1+\ldots+X_n\le s|V=v} =\left( H_1^{*v_1}\ast H_2^{*v_1}\ast\cdots\ast H_n^{*v_n}\right) (s). \]
	The probabilities of $ \{V=v\} $ are given by the Ewen's sampling formula  \cite{Ewens1972}, as
	\begin{equation}\label{eq_Pr_V_v}
	\pr{V=v} = \dfrac{n!}{\alpha(\alpha+1)\cdots(\alpha+n-1)}\prod_{j=1}^n \dfrac{\alpha^{v_j}}{j^{v_j}v_j!}, 
	\end{equation}
	by applying the generalized P\'olya urn  scheme \eqref{eq_Polya_scheme} over the possible different values of $ X_1,\ldots,X_n $. Finaly, the equality $\pr{N_t^X=n}= \pr{S_n\le t}$ is a direct consequence of the definition of $ N_t^X $.
\end{proof} 

Accordingly to the last results, the distribution of the CDP could be express as an infinite sum. Although we are not presenting directly the distributions of $ S_t $ and $  N_t $, they can be express on terms of the distribution of $ S_n $, using the second statement of Lemma \ref{prop:nonparametric1}. Since Dirichlet process tends to focus most of its mass in a few atoms the convergence of the series of \eqref{eq_dist_S_t} is fast. This allow us to approximate the distribution of $ S_t $ in two ways. We can truncate the sum \eqref{eq_dist_S_t} to a finite fixed number of terms, or we can fix a quantity $ \epsilon_0 $ to count only terms with $ \pr{N_t=n}>\epsilon_0 $. In both cases we are restrain the error of the approximation. Also, our computational experiments show that both approximations are stable.

\begin{prop}\label{prop_moments}
	Given $ S_t\sim\dist{CDP}(\alphaT, \GTo,\alphaX, \GXo) $, let $ \mu _{X,i}= \esp{X_1^i}$ and $  \mu _{ N_t,i}=\esp{ N_t^i}$, for $ i=1,2,3 $, then
	\begin{align*}
	\esp{S_t}=&\mu _{N_t,1} \mu _{X,1}\\[3ex]
	\esp{S_t^2}=&\frac{\left(\mu _{N_t,2}-\mu _{N_t,1}\right) (\alpha  \mu _{X,1}^2+\mu _{X,2})}{\alpha +1}+\mu_{N_t,1} \mu_{X,2}\\[3ex]
	\esp{S_t^3}=&\frac{\left(2 \mu _{N_t,1}-3 \mu _{N_t,2}+\mu _{N_t,3}\right) (\alpha ^2 \mu _{X,1}^3+3 \alpha  \mu _{X,2} \mu _{X,1}+2 \mu _{X,3})}{(\alpha +1) (\alpha +2)}\\
	&+\frac{\left(\mu _{N_t,2}-\mu _{N_t,1}\right) \left(\alpha  \mu _{X,1} \mu _{X,2}+\mu _{X,3}\right)}{\alpha +1}+\mu _{N_t,1} \mu _{X,3}
	\end{align*}
\end{prop}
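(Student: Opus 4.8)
The plan is to condition on the renewal count $N_t$ and use that $\{X_i\}_{i=1}^\infty$ is independent of $\{T_j\}_{j=1}^\infty$, hence of $N_t$. On $\{N_t=n\}$ one has $S_t=S_n:=\sum_{i=1}^n X_i$, and the vector $(X_1,\dots,X_n)$ keeps its unconditional exchangeable Dirichlet-process law, so by the law of total expectation $\esp{S_t^k}=\sum_{n\ge 0}\esp{S_n^k}\,\pr{N_t=n}$ for $k=1,2,3$. Thus it suffices to express each $\esp{S_n^k}$ as an explicit polynomial in $n$, with coefficients in $\alpha$ and $\mu_{X,1},\mu_{X,2},\mu_{X,3}$, and then to replace the powers of $n$ by the moments $\mu_{N_t,j}$ via $\sum_n n^j\pr{N_t=n}=\mu_{N_t,j}$. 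Finiteness of the third moments makes the interchange of expectation and series legitimate, and the $n=0$ term is harmless since every polynomial that arises is divisible by $n$.

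For the polynomial form of $\esp{S_n^k}$ I would expand $S_n^k=(\sum_{i=1}^n X_i)^k$ by the multinomial theorem and group the index tuples by their pattern of coincidences; exchangeability of $(X_1,\dots,X_n)$ collapses each group to a single representative mixed moment multiplied by a count that is a linear combination of the falling factorials $n$, $n(n-1)$, $n(n-1)(n-2)$. Concretely, $\esp{S_n}=n\mu_{X,1}$, $\esp{S_n^2}=n\,\mu_{X,2}+n(n-1)\,\esp{X_1X_2}$, and
\[ \esp{S_n^3}=n\,\mu_{X,3}+3\,n(n-1)\,\esp{X_1^2X_2}+n(n-1)(n-2)\,\esp{X_1X_2X_3}, \]
where the weights $1,3,6$ are the multinomial coefficients of the patterns $3$, $2{+}1$, $1{+}1{+}1$ and $6\binom{n}{3}=n(n-1)(n-2)$.

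The remaining ingredient is the evaluation of the mixed moments $\esp{X_1X_2}$, $\esp{X_1^2X_2}$ and $\esp{X_1X_2X_3}$ from the generalized P\'olya urn rule \eqref{eq_Polya_scheme}. The key observation is that $\esp{X_m\mid X_1,\dots,X_{m-1}}=\frac{\alpha}{\alpha+m-1}\mu_{X,1}+\frac{1}{\alpha+m-1}\sum_{i=1}^{m-1}X_i$ is affine in the past, so iterating the tower property turns every mixed moment into marginal moments $\mu_{X,i}$. One conditioning on $X_1$ gives $\esp{X_1X_2}=(\alpha\mu_{X,1}^2+\mu_{X,2})/(\alpha+1)$ and $\esp{X_1^2X_2}=(\alpha\mu_{X,1}\mu_{X,2}+\mu_{X,3})/(\alpha+1)$; conditioning the triple on $(X_1,X_2)$ and reusing these two gives $\esp{X_1X_2X_3}=(\alpha^2\mu_{X,1}^3+3\alpha\mu_{X,1}\mu_{X,2}+2\mu_{X,3})/((\alpha+1)(\alpha+2))$. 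Substituting these into the displays for $\esp{S_n^k}$, rewriting $n$, $n(n-1)$, $n(n-1)(n-2)$ in the basis $\{1,n,n^2,n^3\}$, and then applying $\sum_n(\cdot)\pr{N_t=n}$ to trade powers of $n$ for $\mu_{N_t,1},\mu_{N_t,2},\mu_{N_t,3}$ yields the moment formulas in the statement.

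I expect the only genuinely delicate point to be $\esp{X_1X_2X_3}$: the step-$3$ prediction rule splits mass $\frac{1}{\alpha+2}$ onto each of $\delta_{X_1}$ and $\delta_{X_2}$, so after conditioning on $(X_1,X_2)$ one is left with $\frac{\alpha\mu_{X,1}}{\alpha+2}\esp{X_1X_2}+\frac{1}{\alpha+2}\big(\esp{X_1^2X_2}+\esp{X_1X_2^2}\big)$, and the two lower moments (equal by exchangeability) must be combined with the right weights. Everything else — the multinomial bookkeeping, in particular the weight $3$ attached to the $\esp{X_1^2X_2}$ term, and the conversion between falling-factorial counts of index tuples and ordinary moments of $N_t$ — is routine arithmetic.
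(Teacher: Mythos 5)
Your argument is correct and is essentially the paper's: condition on $N_t$, reduce to $\esp{S_n^k}$ by exchangeability, and evaluate the mixed moments $\esp{X_1X_2}$, $\esp{X_1^2X_2}$, $\esp{X_1X_2X_3}$ from the P\'olya urn. The only difference is cosmetic: you iterate the predictive rule via the tower property, while the paper conditions on the tie pattern of $(X_1,X_2,X_3)$ and uses $\pr{X_1=X_2=X_3}=\tfrac{2}{(\alpha+1)(\alpha+2)}$, $\pr{X_1=X_2\neq X_3}=\tfrac{\alpha}{(\alpha+1)(\alpha+2)}$, and so on; both routes give the same three mixed moments.

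One point you should not gloss over: your (correct) multinomial bookkeeping gives $\esp{S_t^3}=\mu_{N_t,1}\mu_{X,3}+3\left(\mu_{N_t,2}-\mu_{N_t,1}\right)\esp{X_1^2X_2}+\left(\mu_{N_t,3}-3\mu_{N_t,2}+2\mu_{N_t,1}\right)\esp{X_1X_2X_3}$, so the middle term carries the weight $3$, whereas the proposition as printed has weight $1$ there; your substitution therefore does not literally ``yield the moment formulas in the statement.'' A sanity check confirms your coefficient is the right one: with $N_t\equiv 2$ and $\alpha\to 0$ (so $X_2=X_1$ almost surely) the true value is $\esp{(2X_1)^3}=8\mu_{X,3}$, which your formula returns, while the printed formula gives $4\mu_{X,3}$; equivalently, in the i.i.d.\ limit $\alpha\to\infty$ the classical compound-sum identity requires the term $3\,\esp{N_t(N_t-1)}\mu_{X,1}\mu_{X,2}$. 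So the displayed third moment is missing a factor $3$ on its middle term (much as the paper's second-moment computation drops a factor $2$ and writes $n(n+1)$ for $n(n-1)$, though there the final formula is unaffected); you should flag this discrepancy rather than claim exact agreement with the statement.
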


\begin{proof}
	The expression for $ \esp{S_t} $ follows conditioning on $ \{N_t=n\} $ and using the lineality of the expectation operation. To obtain the expression for $ \esp{S_t^2} $  one must consider the possible repeated values of the exchangeable sequence $ \{X_i\} $. From now, let us assume that $ \GXo $ is a continuous distribution. Then, conditioning on $   \{N_t=n\}  $, we obtain
	\begin{align*}
	\esp{S_t^2| N_t=n} =& 	\esp{\left(\sum_{i=1}^nX_i\right)^2}\\
	=& \sum_{i=1}^n\esp{X_i^2} +\sum_{1\le i<j\le n}\esp{X_i X_j}\\
	=& n \mu_{X,2} + n(n+1)\esp{X_1 X_2}\\
	=& n \mu_{X,2} + n(n+1)\left(\mu_{X,1}^2 \dfrac{\alpha}{\alpha+1}+\mu_{X,2}\dfrac{1}{\alpha+1}  \right).
	\end{align*}
	The last equality follows from conditioning to $ \{X_1=X_2\} $, and that $ \pr{X_1=X_2}= \frac{\alpha}{\alpha+1} $. This last expression gives the result for 	$ \esp{S_t^2} $. Likewise, the result for $ 	\esp{S_t^3} $ is obtained by conditioning to the possible repetitions of $ \{X_1,X_2,X_3\} $, and applying $ \pr{X_1=X_2=X_3}=\frac{1}{\alpha+1}\frac{2}{\alpha+2} $, $ \pr{X_1=X_2\ne X_3}=\frac{1}{\alpha+1}\frac{\alpha}{\alpha+2} $ and $ \pr{X_1\ne X_2\ne X_3\ne X_1}=\frac{\alpha}{\alpha+1}\frac{\alpha}{\alpha+2} $. Finally, in the discrete case  we only need to assured that we are conditioning only on cases when the variables are equal as a consequence of the  P\'olya urn's repetitions, and the formulas follow. 
\end{proof}

\begin{prop}\label{prop_M_S_n}
	Under the notation of Lemma \ref{prop:nonparametric1}, the moment generator function of $S_n  $ is given by
	\begin{equation}\label{eq_prop_M_S_n}
	M_{S_n}(t)= \dfrac{	n!}{\alpha(\alpha+1)\cdots(\alpha+n-1)}\sum_{v\in\Delta_n} \prod_{j=1}^n \dfrac{1}{v_j!}\left[\dfrac{\alpha M_X(tj)}{j}\right]^{v_j},
	\end{equation}
	where  $ M_X$ denotes the moment generator function of $ X_1 $.
\end{prop}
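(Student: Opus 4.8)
The plan is to obtain \eqref{eq_prop_M_S_n} directly from Lemma \ref{prop:nonparametric1} by translating the convolution representation of the law of $S_n$ into a product representation of moment generating functions. First I would observe that the distribution $H_i(\cdot)=\GXo(\cdot/i)$ is exactly the law of $iX$ with $X\sim\GXo$, so its moment generating function is $\int e^{tx}\,dH_i(x)=\esp{e^{tiX}}=M_X(ti)$. Since the moment generating function of a convolution of independent laws is the product of the individual moment generating functions, the block $H_1^{*v_1}\ast H_2^{*v_2}\ast\cdots\ast H_n^{*v_n}$ has moment generating function $\prod_{j=1}^n M_X(tj)^{v_j}$.

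Second, I would reproduce at the level of moment generating functions the conditioning argument used in the proof of Lemma \ref{prop:nonparametric1}: letting $V=(V_1,\dots,V_n)$ be the multiplicity vector, conditionally on $\{V=v\}$ the sum $S_n$ decomposes into mutually independent contributions, of which exactly $v_j$ are independent copies of $jX$ for $X\sim\GXo$; hence $\esp{e^{tS_n}\mid V=v}=\prod_{j=1}^n M_X(tj)^{v_j}$. Taking expectation over $V$ using the Ewens weights $\pr{V=v}$ from \eqref{eq_Pr_V_v} yields
\[
M_{S_n}(t)=\sum_{v\in\Delta_n}\frac{n!}{\alpha(\alpha+1)\cdots(\alpha+n-1)}\left(\prod_{j=1}^n\frac{\alpha^{v_j}}{j^{v_j}v_j!}\right)\prod_{j=1}^n M_X(tj)^{v_j}.
\]

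Third, I would factor the constant $n!/[\alpha(\alpha+1)\cdots(\alpha+n-1)]$ out of the sum over $\Delta_n$ and merge the two products over $j$, grouping $\alpha^{v_j}$, $j^{-v_j}$ and $M_X(tj)^{v_j}$ into $\bigl[\alpha M_X(tj)/j\bigr]^{v_j}$; this gives precisely \eqref{eq_prop_M_S_n}. Alternatively, one could skip the probabilistic argument entirely and simply apply the linear functional $f\mapsto\int e^{ts}\,df(s)$ term by term to the mixture formula for $\pr{S_n\le s}$ furnished by Lemma \ref{prop:nonparametric1}, which produces the same identity.

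I do not expect a real obstacle, as the statement is a mechanical transform of Lemma \ref{prop:nonparametric1}; the only points that need a word of care are (i) justifying that conditioning on $V$ genuinely makes the distinct-value blocks independent with the right scalings — but this is exactly what is established in the proof of Lemma \ref{prop:nonparametric1} via the P\'olya urn scheme \eqref{eq_Polya_scheme} — and (ii) noting that for fixed $n$ the right-hand side is a finite sum, so the identity is valid at every $t$ for which $M_X(tj)$ is finite for all $j=1,\dots,n$, in particular on a neighbourhood of the origin whenever $M_X$ exists near $0$.
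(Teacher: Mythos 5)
Your argument is correct and coincides with the paper's own proof: both condition on the multiplicity vector $V$, identify $\esp{e^{tS_n}\mid V=v}=\prod_{j=1}^n M_X(tj)^{v_j}$ from the independence of the distinct-value blocks, and then average over $V$ using the Ewens weights \eqref{eq_Pr_V_v} before regrouping the products. The extra remarks on the domain of validity of $M_X$ and the alternative transform-of-the-mixture derivation are fine but add nothing essentially different.
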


\begin{proof}
	Conditioning over the possible partitions we obtain
	\begin{align*}
	M_{S_n}(t)&=\sum_{v\in\Delta_n} \pr{V=v}\esp{e^{tS_n}|V=v}\\
	&=\sum_{v\in\Delta_n} \pr{V=v} \prod_{j=1}^nM_X(tj)^{v_j},
	\end{align*}
	where the last equality follows since the expected value of $ e^{tS_n} $  conditioned on $ \{V=v\} $ is the product of independent random variables equal in distribution to $ e^{tjX_1} $, each repeated $ v_j $ times for $ j=1,\ldots,n $. This gives \eqref{eq_prop_M_S_n} when applying \eqref{eq_Pr_V_v}. 
\end{proof}

To see an application of \eqref{eq_prop_M_S_n} let us consider the Gaussian  distribution case. For this base distribution, we obtain
\begin{align*}
M_{S_n}(t)&= \dfrac{n!}{\alpha(\alpha+1)\cdots(\alpha+n-1)}\sum_{v\in\Delta_n} \prod_{j=1}^n \dfrac{1}{v_j!}\dfrac{\alpha^{v_j}}{j^{v_j}} e^{t(\mu jv_j) + t^2(\sigma^2j^2v_j)/2 }\\
&= \dfrac{n!}{\alpha(\alpha+1)\cdots(\alpha+n-1)}\sum_{v\in\Delta_n} e^{t(n\mu) + t^2(\sigma^2\sum_{j=1}^nj^2v_j)/2 } \prod_{j=1}^n \dfrac{1}{v_j!}\dfrac{\alpha^{v_j}}{j^{v_j}}, \\
\end{align*}
thus, we obtain that the sum of variables with prior DP and base measure Gaussian, is the mixture of Gaussian random variables. The next result shows that the CDP is a conjugate model.

\begin{prop}
	If $ (X_1,T_1),\ldots,(X_n,T_n) $ is a random sample of $ S_t\sim\dist{CDP}(\alphaT, \GTo,\alphaX, \GXo) $, then 
	\begin{align}
	S_t&\mid (X_1,T_1),\ldots,(X_n,T_n) \nonumber \\ 
	&\sim\dist{CDP}\left(\alphaT+n, \dfrac{\alpha}{\alpha+n}\GTo+\dfrac{\alpha}{\alpha+n}\sum_{j=1}^n\delta_{T_j},\alphaX+n, \dfrac{\alpha}{\alpha+n}\GXo+\dfrac{\alpha}{\alpha+n}\sum_{i=1}^n\delta_{X_i}\right). \label{eq_post_CDP}
	\end{align} 
\end{prop}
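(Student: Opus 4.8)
The plan is to reduce the statement to the classical conjugacy of a single Dirichlet process prior, using the fact that a CDP is driven by two \emph{independent} random measures $\GT\sim\mbox{DP}(\alphaT,\GTo)$ and $\GX\sim\mbox{DP}(\alphaX,\GXo)$, and that, conditionally on this pair, the interarrival times $\{T_j\}$ and the event sizes $\{X_i\}$ are i.i.d.\ draws from $\GT$ and $\GX$ respectively (Definition \ref{def_CDP}). Observing a sample $(X_1,T_1),\ldots,(X_n,T_n)$ of the process amounts to observing the first $n$ event sizes $X_1,\ldots,X_n$ and the first $n$ interarrival times $T_1,\ldots,T_n$; since the likelihood contribution of each $T_j$ involves the parameter only through $\GT$ and that of each $X_i$ only through $\GX$, while the prior on $(\GT,\GX)$ is a product, the posterior law of $(\GT,\GX)$ factorizes into the posterior of $\GT$ given $T_1,\ldots,T_n$ times the posterior of $\GX$ given $X_1,\ldots,X_n$.

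First I would make this factorization precise by writing the joint law of $(\GT,\GX,T_1,\ldots,T_n,X_1,\ldots,X_n)$ and checking that $\GX$ is conditionally independent of $(\GT,T_1,\ldots,T_n)$ given $X_1,\ldots,X_n$, and symmetrically for $\GT$. Second, I would invoke the conjugacy of the Dirichlet process from \cite{Ferguson1973} --- equivalently derivable from the generalized P\'olya urn scheme \eqref{eq_Polya_scheme} and \eqref{eq_Pr_V_v} as in Lemma \ref{prop:nonparametric1} --- which yields
\[
\GX\mid X_1,\ldots,X_n \sim \mbox{DP}\!\left(\alphaX+n,\ \tfrac{\alphaX}{\alphaX+n}\GXo+\tfrac{1}{\alphaX+n}\textstyle\sum_{i=1}^{n}\delta_{X_i}\right),
\]
together with the analogous statement for $\GT$ given $T_1,\ldots,T_n$. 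Third, I would note that, conditionally on the data, the remaining randomness of $\{S_t\}$ is generated exactly as prescribed by Definition \ref{def_CDP}: the two (independent) driving measures are the posterior Dirichlet processes just displayed, and, given them, the interarrival times and event sizes are i.i.d.\ draws; this is the Blackwell--MacQueen conditional i.i.d.\ property \cite{Blackwell1973}. Hence the conditional process is, by definition, a CDP with the updated hyperparameters, which is \eqref{eq_post_CDP}.

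The step I expect to be the main obstacle is the careful justification of the posterior factorization, namely that observing the component \emph{pairs} $(X_i,T_i)$ does not couple the two Dirichlet posteriors. This rests on two ingredients that should be spelled out: the prior independence of $\GT$ and $\GX$ built into Definition \ref{def_CDP}, and the fact that the only features of a sample of $\{S_t\}$ that carry information about $(\GT,\GX)$ are the marginal sequences $T_1,\ldots,T_n$ and $X_1,\ldots,X_n$ --- the counting process $N_t$ being a deterministic functional of the $T_j$'s and therefore uninformative about $\GX$. A minor related point worth a sentence is that ``$S_t\mid\text{data}$'' should be read as the law of the process obtained by re-running the construction of Definition \ref{def_CDP} with the posterior driving measures; the predictive law of the not-yet-observed part of each exchangeable sequence is precisely an exchangeable sequence with the posterior DP as de Finetti mixing measure, so no inconsistency arises.
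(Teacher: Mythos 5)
Your proposal is correct and takes essentially the same route as the paper, whose proof is the one-line observation that the result follows from the conjugacy of the Dirichlet process prior together with the independence of $\{T_j\}_{j=1}^\infty$ and $\{X_i\}_{i=1}^\infty$; your explicit posterior factorization of $(\GT,\GX)$ is simply a careful spelling out of that argument. Incidentally, your posterior base-measure weights $\tfrac{\alphaX}{\alphaX+n}$ on $\GXo$ and $\tfrac{1}{\alphaX+n}$ on $\sum_{i=1}^n\delta_{X_i}$ are the correct ones, whereas the displayed statement \eqref{eq_post_CDP} writes $\tfrac{\alpha}{\alpha+n}$ for both weights (and omits the superscripts on $\alpha$), which appears to be a typo in the paper rather than a flaw in your argument.
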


\begin{proof}
	The proof is immediate by applying the conjugate property of the Dirichlet process prior and the independence of $ \{T_j\}_{j=1}^\infty $ with $ \{X_i\}_{i=1}^\infty$. 
\end{proof}

\subsection{Two examples of flexible base mesures for CDP and CDPM}

Let us continue by presenting two examples of distribution families for the base distribution $\GXo$ where the convolution of \eqref{eq_dist_S_t} simplifies. These families are the Gaussian and the phase-type distribution. Is important to notices that both had a wide support that allow to approximate other distributions. First, in the case of Gaussian distributions defined by $\GXo= \dist{ N}(\mu,\sigma^2) $, we obtain that $ H_j = \dist{ N}(\mu j,\sigma^2j^2) $, and so
\[ H_1^{*v_1}\ast H_2^{*v_1}\ast\cdots\ast H_n^{*v_n}  = \dist{ N}\left(\mu n,  \sigma^2\sum_{j=1}^nj^2v_j\right). \]
This implies that the density of $ S_n $ is given by
\[ f_{S_n}(t)= \dfrac{n!}{\alpha(\alpha+1)\cdots(\alpha+n-1)}\sum_{v\in\Delta_n}\dfrac{e^{-(t-\mu n)^2/2\sigma^2\sum_{j=1}^nj^2v_j}}{\sqrt{2\pi \sigma^2\sum_{j=1}^nj^2v_j }} \prod_{j=1}^n \dfrac{\alpha^{v_j}}{j^{v_j}v_j!} .\]
Rates of the convergence of Gaussian mixtures to the true underlying distribution are presented in \cite{Ghosal2007,Tokdar2006}. As a consequence of this convergence we could use the Gaussian model in cases with poor prior information.

For the second example we present the analytic expression for $ f_{S_n} $ is the case of phase-type distributions. An excellent account of the theory of phase-type and matrix-exponential distributions is presented in \cite{Bladt2017}. An important property of this family is that it is dense on the set of positive random variables; i.e., any positive random variable can be arbitrary approximated by a phase-type distribution. We will denote by $ U\sim \dist{PH}_p(\mathbf{\pi},\textbf{T}) $, a random variable with phase-type density given by
\[ f(u) = \mathbf{\pi} e^{\textbf{T}u}\textbf{t}, \]
where $ \mathbf{\pi} = (\pi_1,\pi_2,\ldots,\pi_p) $ is a probability row vector, $ \textbf{T} $ a subgenerator matrix of dimension $ p $, and $ \textbf{t}=-\textbf{T}\textbf{1} $, with $ \textbf{1} $ the vertical vector of ones of length $ p $. Then, under this notation, if $ \GXo=\dist{PH}_p(\mathbf{\pi},\textbf{T})$, we obtain that $ H_j = \dist{PH}_p(\mathbf{\pi},\textbf{T}/j) $. By applying the convolution property of phase-type variables:
\[ Z_1+Z_2\sim \dist{PH}_{p_1+p_2}\left((\mathbf{\pi}_1,0), \begin{bmatrix}
\textbf{T}_1 &  \textbf{t}_1 \mathbf{\pi}_2\\
0& \textbf{T}_2
\end{bmatrix} \right),  \]
for $ Z_1\sim \dist{PH}_{p_1}(\mathbf{\pi}_1,\textbf{T}_1) $ and $ Z_2\sim \dist{PH}_{p_2}(\mathbf{\pi}_2,\textbf{T}_2) $, with $ \textbf{t}_1=-\textbf{T}_1\textbf{1} $. Thus, 
\[ H_j^{*v_j} = \dist{PH}_{pv_j}((\mathbf{\pi},0,\ldots,0),\textbf{T}(j,v_j)),\]
where $ \textbf{T}(j,v_j) $ is a matrix of dimension $ pv_j\times pv_j $, given by
\[ \textbf{T}(j,v_j)= \begin{bmatrix}
\textbf{T} & -\textbf{T}\textbf{1}\mathbf{\pi} & 0 & \dots  & 0 \\
0& \textbf{T} & -\textbf{T}\textbf{1}\mathbf{\pi} &  \dots  & 0 \\
0&0& \textbf{T} &  \dots  & 0 \\
0&0&0 & \dots  & 0 \\
\vdots & \vdots & \vdots & \ddots & \vdots \\
0&0&0&0  & \textbf{T}
\end{bmatrix} / j.\]
This implies
\[H_1^{*v_1}\ast H_2^{*v_1}\ast\cdots\ast H_n^{*v_n} = \dist{PH}_{p\sum v_j}\left((\mathbf{\pi},0,\ldots,0),\textbf{T}^v\right),\]
where $ \textbf{T}^v $ is a matrix of dimension $ p\sum v_j\times p\sum v_j $, given by
\[ \textbf{T}^v= \begin{bmatrix}
\textbf{T}(1,v_1) & -\textbf{T}\textbf{1}\mathbf{\pi} & 0 & 0 & \dots  & 0 \\
0& \textbf{T}(2,v_2) & -\textbf{T}\textbf{1}\mathbf{\pi}/2  & 0&  \dots  & 0 \\
0&0& \textbf{T}(3,v_3)  & -\textbf{T}\textbf{1}\mathbf{\pi}/3&  \dots  & 0 \\
0&0&0  & \textbf{T}(4,v_4)  & \dots  & 0 \\
\vdots & \vdots& \vdots & \vdots & \ddots & \vdots \\
0&0&0&0 & 0  & \textbf{T}(n,v_n)
\end{bmatrix}. \]
We eliminate from $ \textbf{T}^v $ the rows where $ v_j=0 $.  This implies that the density of $ S_n $ is given by
\[ f_{S_n}(u) = \dfrac{n!}{\alpha(\alpha+1)\cdots(\alpha+n-1)}\sum_{v\in\Delta_n} \mathbf{\pi} e^{\textbf{T}^vu}\mathbf{t}^v \prod_{j=1}^n \dfrac{\alpha^{v_j}}{j^{v_j}v_j!}, \]
where $ \mathbf{t}^v = -\textbf{T}^v\textbf{1} $. Thus, the sum of variables with DP prior and phase-type base distribution is the mixture of phase-type random variables.

\section{An application to dog rabies}


Rabies is one of the most severe zoonotic diseases. It is caused by a rhabdovirus in the genus Lyssavirus and infects many mammalian species. It can be transmitted through infected saliva, and it is almost fatal following the onset of clinical symptoms \cite{webber2009communicable}. Up to 99\% of cases, domestic dogs are responsible for rabies virus transmission to humans. In Africa, an estimated 21,476 human deaths occur each year due to dog-mediated rabies, 36.4\% of global human deaths \cite{WorldHealthOrganization2018}. To have effective interventionism against zoonotic infections, it is important to recognize whether infected individuals stem from a single outbreak sustained by local transmission, or from repeated introductions \cite{Cori2018,Ypma2013}. 

Probability models that are commonly apply to epidemiological spreads are: coupling structures,  random graphs, EM-algorithm and MCMC methods. For a recent account of the theory we refer the reader to \cite{Andersson2000} and \cite{Brauer2008}. These models have dependence structures to resemble the infectious rate as a function of infected individuals in its vicinity. Another important quality of these models is to admit censored data, since often the epidemic process is only partly observed. These two properties are also found in the CDPM model. The spacial vicinity is handle directly by the posterior distribution; areas where the cases of the disease are found have a bigger probability of new cases. Censored data can be manage if the censore is in time or in the spacial component.

In \cite{Cori2018} is analyze data on dog rabies reported in Bangui, the capital of the Central African Republic between the 6th January 2003 and the 6th march 2012, they report 151 rabies cases, with information on report date, spatial coordinates, and genetic sequence of the isolated virus. The data are available in the package \texttt{outbreaks} of R. They applied a clustering graph model for each component and extract the most connected dots by pruning. We study this data using the CDPM model under the following assumptions. To model the time component we use an exponential mixture kernel  with  $ \dist{Ga(\alpha_0,\beta_0)} $ base distribution, and to model the spatial component we use a Gaussian mixture kernel  with  Gaussian-Inv-Gamma base distribution. To fit these Dirichlet process mixtures we use the Gibs sampling methodology.

\begin{figure}[h!]
	\centering
	\includegraphics[scale=.5]{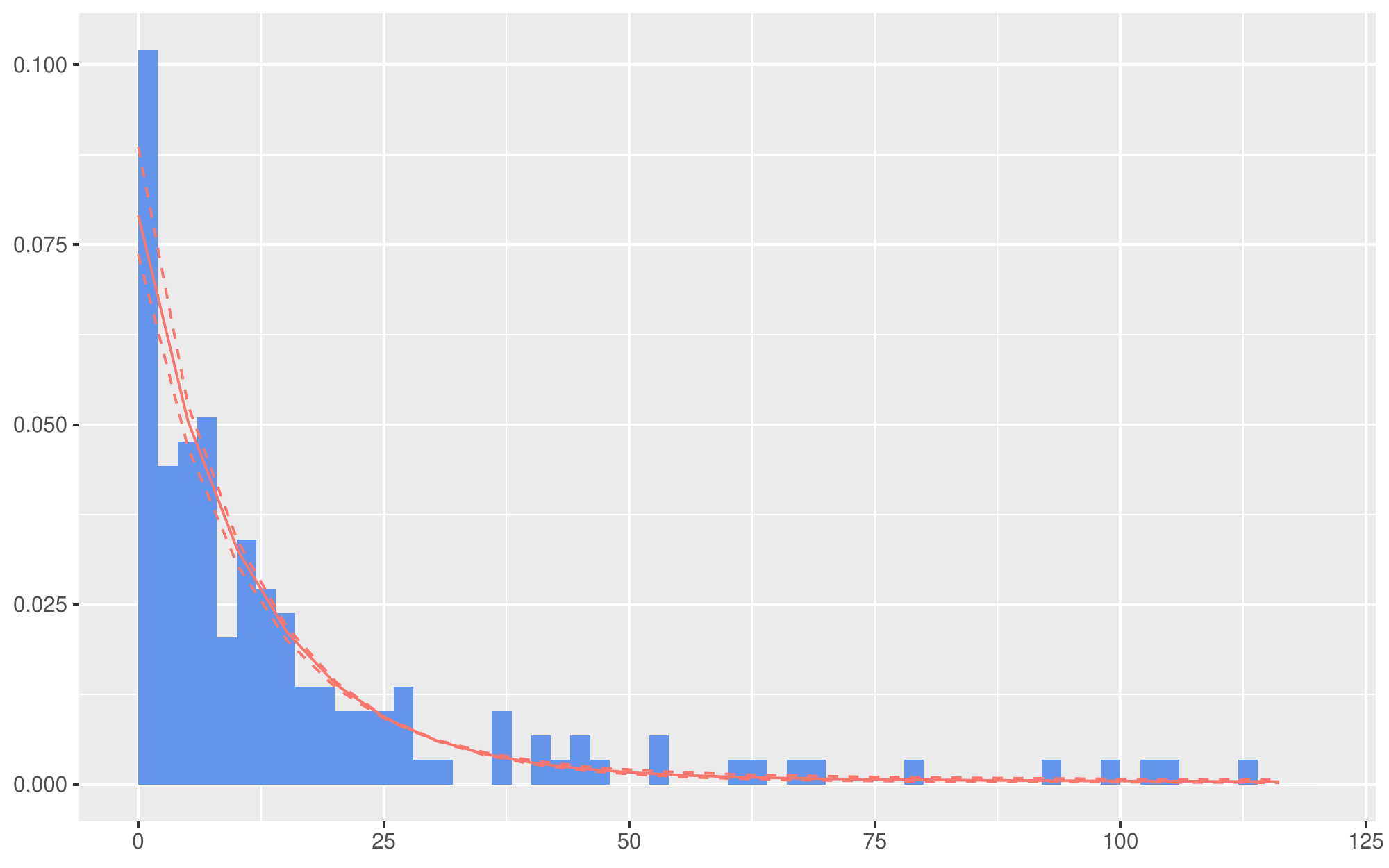}
	\caption{Visualization of the density estimation of the rabies data using the CDPM model. In this figure are exhibit the density estimation of an exponential mixture kernel  with  $ \dist{Ga(\alpha_0,\beta_0)} $ base distribution. The prior parameters of the Gamma base measure are  $ \alpha_0=1 $ and $ \beta_0=8 $.  In this figure is compared the data histogram against the posterior density estimator with its 95\% confidence interval.  }
	\label{fig_dens_T_i}       
\end{figure}

Figure \ref{fig_dens_T_i} shows that the model is able to capture
the density pattern of the time component. In this figure is compared the data histogram against the posterior density estimator with its 95\% confidence interval. As pointed by \cite{Cori2018}, the dates of the reports are close. This is characterized by the appearance of only two mixture components in the posterior distributions. Figure \ref{fig_map_rabies} present the spatial cluster behavior of the posterior distribution. Even though some spatial components are missing, we could obtain the posterior estimators of the spatial data; this missing values does not affect our analysis since we are assuming independence between spatial and temporal variables. In comparison with the results of \cite{Cori2018} we obtain almost the same cluster structure. 

In this application, the value of applying the CDPM model is on the estimation of future rabies outbreaks. Under our framework we obtained the complete probability structure of probable future contagious. We could answer many statistical questions through simulation using the posterior distributions of the CDPM model. For instance, we can find the probability that in the next year the number of cases doubles with respect to past year numbers. Likewise, we could obtain the spatial stochastic mobility of the disease, by locating the regions of where the disease is more concentrated. Our model allow an early assessment of infectious disease outbreaks, which is fundamental to implementing timely and effective control measures.

\begin{figure}[!h]
	\centering
	\includegraphics[scale=0.7]{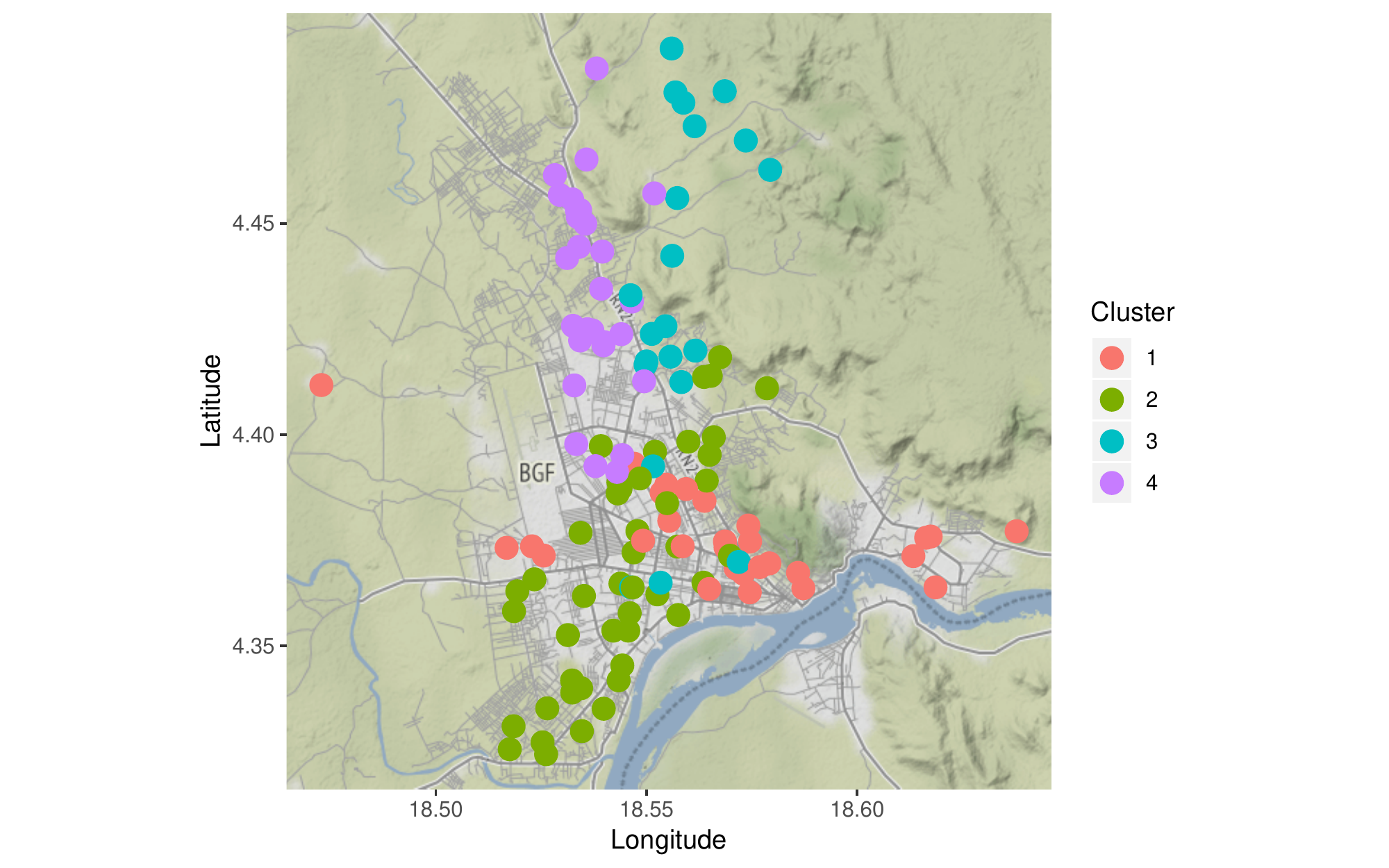}
	\caption{Visualization of spatial clusters of the rabies data using the CDPM model. In this figure are exhibit the spatial cluster structure of a Gaussian mixture kernel  with  Gaussian-Inv-Gamma base distribution.}
\label{fig_map_rabies}       
\end{figure}

\section{Discussion}

We have proposed a simple approach for statistical analysis of renewal phenomena, which combines ideas from Renewal theory  and Bayesian nonparametric theory. The proposal relies on consider two Dirichlet process priors; one on the time occurrence of event and one the arrival events. The resulting methodology is not computationally demanding and allows us to predict relatively well the evolution of renewal phenomena. Furthermore, it can be applied in cases where the cluster structure is an important factor in the analysis.

The proposed methods performs well in real spatial contexts, showing appealing features which can be used to bring closer these methodologies to practitioners in important scientific fields such as contagious analysis and general spatiotemporal analysis. Other choices of random measures, potentially lead to similar outcomes. These, more general classes priors, will be pursued elsewhere.

\section{Acknowledgments}
	The first author is grateful to Prof. Bego\~na Fern\'andez and Prof. Rams\'es Mena for their valuable suggestions on an earlier version of the manuscript. This research was partially supported by a DGAPA Posdoctoral Scholarship.

\bibliographystyle{plainnat}
\bibliography{ArXiv_CDP}

\end{document}